\documentclass[pra,twocolumn,showpacs,superscriptaddress]{revtex4}
\usepackage{amssymb}
\usepackage{amsmath}
\usepackage{graphicx}
\usepackage{dcolumn}
\usepackage{bm}

\newtheorem{theorem}{Theorem}[section]

\newenvironment{proof}[1][Proof]{\begin{trivlist}
\item[\hskip \labelsep {\bfseries #1}]}{\end{trivlist}}

\setcounter{MaxMatrixCols}{10}

\begin{document}

\preprint{APS/123-QED}
\title{ The Essentially Entangled Component of Multipartite  Mixed Quantum States, its Properties and an Efficient Algorithm for
its Extraction}
\author{V.\ M.\ Akulin}
\affiliation{Laboratoire Aim\'{e} Cotton, CNRS (UPR 3321), B\^{a}timent 505, 91405 Orsay
Cedex, France. }
\affiliation{Institute for Information Transmission Problems of the Russian Academy of
Science, Bolshoy Karetny per. 19, Moscow, 127994, Russia.}
\author{G. A. Kabatyanski}
\affiliation{Institute for Information Transmission Problems of the Russian Academy of
Science, Bolshoy Karetny per. 19, Moscow, 127994, Russia.}
\affiliation{Laboratoire J.-V. Poncelet CNRS (UMI 2615) Bolshoi Vlassievsky per. 11,
Moscow, 119002 Russia.}
\author{A.\ Mandilara}
\affiliation{Department of Physics, School of Science and Technology, Nazarbayev University, 53, Kabanbay
batyr Av., Astana, 010000, Republic of Kazakhstan.}

\pacs{03.65.Ud, 03.67.Mn, 03.67.Bg}

\begin{abstract}
We introduce with geometric means  a  density matrix decomposition  of a multipartite quantum  system of a finite dimension into
two density matrices: a separable one, also known as the best separable approximation,  and an essentially entangled one, which  contains no product states components.  
We show that this convex decomposition solving the separability problem, can be achieved in practice with the help of an algorithm  based on linear programming, which in the general case scales polynomially  with the dimension of the multipartite system. Furthermore, we suggest methods for analyzing the multipartite entanglement content of the essentially entangled component and derive analytically an upper bound for its rank.
 We illustrate the algorithm at an example of a composed system of total dimension $12$ undergoing loss of coherence due to classical noise and we trace the time evolution of its essentially entangled component. We suggest a ``geometric'' description of entanglement dynamics and show how it explains the well-known phenomena  of sudden death and revival of multipartite entanglement. 

\end{abstract}

\maketitle

\section{Introduction}

Though quantum entanglement  is a concept which has attracted much of the attention of physicists working in
various fields \cite{Amico},  still,  there  remains room for further progress on its understanding \cite{Horo}. One of the main open problems is the efficient detection and characterization of multipartite entanglement  of  density matrices representing open quantum systems undergoing non-unitary evolution \cite{Davidovich}.

All experimentally addressable information about a
quantum physical system is summarized
in its density matrix $\widehat{\rho }$.
We focus on a  multipartite quantum system, which comprises a finite
number $K<\infty $ of parts $\mathcal{N}_{k}$ numerated by index $k=1,\ldots
,K$, each of which has the Hilbert space of a finite dimensionality $N_{k}$,
whence $\prod_{k=1}^{K}N_{k}=N$ is the dimensionality of the Hilbert space
of the entire system. 
This system --assembly of parts,
is called entangled (or inseparable) if and only if its
density matrix cannot be caste as a statistical sum
\begin{equation}
\widehat{\rho }\neq \sum_{i=1}^{M}a_{i}\prod\limits_{\otimes
k=1}^{K}\left\vert \alpha _{i}^{k}\right\rangle \left\langle \alpha
_{i}^{k}\right\vert ,  \label{EQ0}
\end{equation}%
 ($a_{i}>0$, $\sum_{i=1}^{M}a_{i}=1$) of $
M$ various ($i=1,\ldots ,M$) direct products $\prod\limits_{\otimes
k=1}^{K}\left\vert \alpha _{i}^{k}\right\rangle \left\langle \alpha
_{i}^{k}\right\vert $ of the density matrices $\left\vert \alpha
_{i}^{k}\right\rangle \left\langle \alpha _{i}^{k}\right\vert $ of pure
states $\left\vert \alpha _{i}^{k}\right\rangle $ of each part. This
condition provides the most general case of entangled systems opposite to a
separable quantum system comprised of statistically independent elements,
where Eq.(\ref{EQ0}) holds as an equality. 

Many approaches \cite{Horo} have been developed so far aiming to answer the question whether or not a density matrix is separable. 
Concerning exact analytic results, up to now, there is no  method applicable  to the multipartite problem, and we  believe that such a solution does not exist at all. 
An algorithmic  solution to the ``decision'' problem \cite{NP} associated with separability has been conjectured to be a NP hard problem 
but valuable progress has been done (mainly on the bi-separability problem) in approaches  \cite{Doherty}-\cite{Ioannou2} where semidefinite programming is merged with analytic criteria \cite{Peres}. 

In this work we provide a geometric point of view on the problem of inseparability  that suggests an efficient solution based on linear programming. Employing simple geometric arguments we  suggest an algorithm that results to a \textit{unique} decomposition of   the density matrix  as \begin{equation}\widehat{\rho }= (1-B) \widehat{\rho }_{\mathrm{sep}}+ B \widehat{\rho }_{\mathrm{ent}}\label{one}\end{equation} where $\widehat{\rho }_{\mathrm{sep}}$  is, what we call in this work, the  \textsl{separable component}, $\widehat{\rho }_{\mathrm{ent}}$ the 
\textsl{essentially entangled} part  which cannot have any separable states as components and $B$ is a positive number in the range $\left[0,1\right]$. Obviously, the  decomposition, Eq.(\ref{one}), implies that the state $\widehat{\rho }$ is  separable in all $K$ parts only for $B=0$.

The decomposition in Eq.(\ref{one}) has been initially introduced in \cite{Sanpera} without resorting to a geometric picture and the component $(1-B) \widehat{\rho }_{\mathrm{sep}}$  is widely known as \textit{the best separable approximation} of the  density matrix $\widehat{\rho }$. In that same seminal work, the uniqueness of the decomposition has been proven for  the  multipartite case and  a strict upper bound on the rank 
of the component $  \widehat{\rho }_{\mathrm{ent}}$ for the biseparable case. In this work we generalize the latter to the multipartite case, proving that the rank of  $\widehat{\rho }_{\mathrm{ent}}$ is upper bounded by a number related to the dimensions of the total system and those of the sub-elements.

On a practical level, we  show that linear programming algorithm combined with a simple optimization technique allows one
to efficiently find the decomposition of a generic density matrix
\begin{equation}
\widehat{\rho }=\sum_{i=1}^{M}a_{i}\underset{\mathrm{product\ states}}{\underbrace{\prod\limits_{\otimes k=1}^{K}\left\vert \alpha
_{i}^{k}\right\rangle  \left\langle \alpha _{i}^{k}\right\vert }}+\sum_{i=1}^{m}b_{i}\underset{\mathrm{entangled\ states}}{\underbrace{\left\vert \beta _{i}\right\rangle \left\langle \beta _{i}\right\vert}} ,
\label{EQ1}
\end{equation}
with the coefficients  constrained by the requirements 
\begin{eqnarray}
a_{i}>0,\quad b_{i}\geq 0,\quad
\sum_{i=1}^{M}a_{i}+\sum_{i=1}^{m}b_{i}=1,\\ \quad \mathrm{and}\quad \sum_{i=1}^{m}b_{i}\rightarrow\min.  \label{EQ2}
\end{eqnarray}
 When this limit is reached, the decomposition in Eq.(\ref{EQ1}) yields Eq.(\ref{one}) with $B=\left(\sum_{i=1}^{m}b_{i}\right)_{\min} $:
\begin{equation}
\widehat{\rho }_{\mathrm{sep}}=\sum_{i=1}^{M}\frac{a_{i}}{1-B}\underset{\mathrm{product\ states}}{\underbrace{\prod\limits_{\otimes k=1}^{K}\left\vert \alpha
_{i}^{k}\right\rangle  \left\langle \alpha _{i}^{k}\right\vert }}\label{sep},
\end{equation}
and
\begin{equation}
\widehat{\rho }_{\mathrm{ent}}=\sum_{i=1}^{m}\frac{b_{i}}{B}\underset{\mathrm{entangled\ states}}{\underbrace{\left\vert \beta _{i}\right\rangle \left\langle \beta _{i}\right\vert}}.\label{ent}
\end{equation}

 Is known that  the linear programming method, in the general case, scales polynomially with the dimension of the vector space where it is applied. Employing the fact that $M+m\leq N^2$ in Eq.(\ref{EQ1}), where $N$ is  dimension of quantum assembly  under study, we show that the proposed algorithm yielding the decomposition Eq.(\ref{EQ1}) scales   as  $(2N^4)^3$.  
 
The paper is structured as follows. In Section~\ref{A} we introduce  the idea of the
decomposition Eq.(\ref{one}) and illustrate its uniqueness  with a simple geometric picture generalizing the Bloch vector representation of a
two-level system. This picture also helps us to analyze some properties of   $\widehat{\rho }_{\mathrm{ent}}$  and we conclude this section with a theorem setting an upper limit on its  rank. In Section~\ref{B} we
present a version of an efficient linear programming algorithm allowing one
to explicitly find the decomposition Eqs.(\ref{EQ1})-(\ref{EQ2}) for a generic density matrix.
In Section~\ref{C}  we
suggest ideas for characterizing entanglement of the component   $\widehat{\rho }_{\mathrm{ent}}$ which naturally
reflects the entanglement properties of $\widehat{\rho }$.
 In Section~\ref{D} we present a physical example  which
demonstrates the  implementation of the technique introduced in previous sections and connects them with known notions in open quantum system dynamics. We conclude by the discussion in  Section~\ref{E}.

\section{The geometric idea of  decomposition  and  properties of the essentially entangled part}\label{A}

All possible density matrices of a  quantum system with a Hilbert space of  dimension $N$, are comprising a  convex
set of positive Hermitian matrices of unit trace. This set can be
viewed  as a manifold in the vector space of Hermitian matrices endowed with the  a metric -- the
Hilbert-Schmidt inner product $\mathrm{tr}\left[\widehat{\rho}_i,\widehat{\rho}_j\right]$. 
The requirement of the unit trace in this representation means that the inner product of a
vector representing a density matrix and a vector representing the unit
matrix equals to unity. Henceforth we  call this manifold ``Liouville vector space''. Furthermore, the density matrix of a pure state has
rank one, which implies that the length of the vector corresponding to a pure state,  equals to unity.
The density matrix manifold is thus a convex hull at the unit-length length vectors  having   unit projection on  the unity matrix. 
\begin{center}
 \begin{figure*}[t] \makebox[\textwidth][c]{\includegraphics[width=0.8\textwidth]{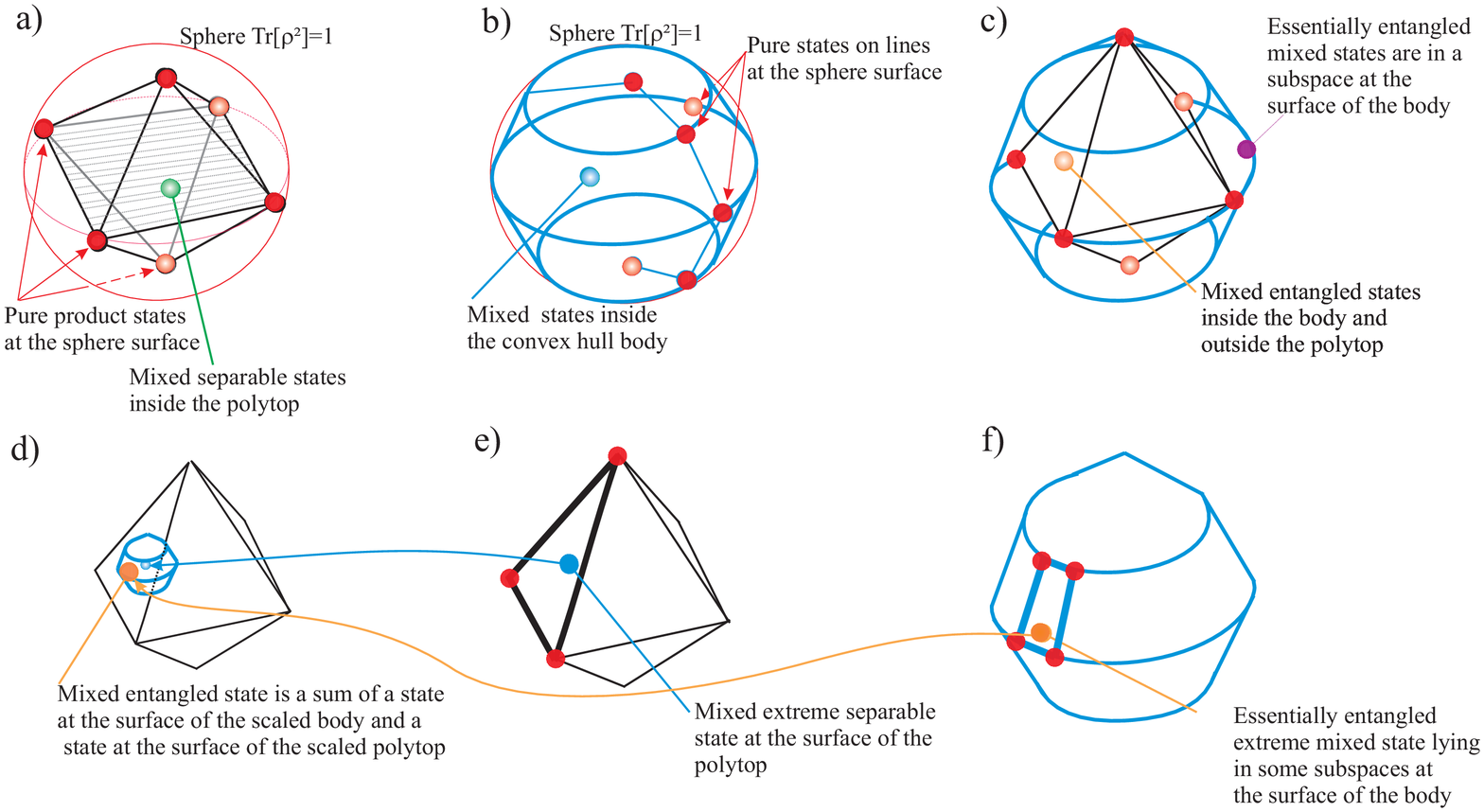}}
\vspace{2cm}
\caption{A symbolic illustration of the geometric structure of density matrices and of the decomposition Eq.(\ref{one}).}
 \label{FIG1}\end{figure*}
\end{center}

A natural basis exists for such a vector space suggested by the $N^{2}$
properly normalized generators $\widehat{g}_{i}^{N}$ of the unitary group $SU(N)$, 
including the unity $\widehat{I}=\sqrt{N}\widehat{g}_{0}^{N}$. This basis
allows one to cast a $N\times N$ density matrix  of a quantum system
as $\widehat{\rho }=\sum_{i=0}^{N^{2}-1}\widehat{g}_{i}^{N}r_{i}$ with  $r_{i}=\mathrm{Tr}[\widehat{g}_{i}^{N},\widehat{\rho }]$ the $N^2$ real vector components. This geometric picture is in direct analogy to the Bloch vector for  two-level systems.  The pure quantum states lay at the surface of the 
unit hypersphere, $\mathrm{Tr}[\widehat{\rho }^{2}]=\sum_{i=0}^{N^{2}-1}r_{i}^{2}=1$,  but
in contrast to the Bloch vector of $2$--dimensional pure quantum states, these states do not cover all the surface of the  hypersphere of dimension $N^2-2$
but are  confined at a manifold of lower dimensionality, $N^{2}-N$. This can be easily understood when the characteristic polynomial $\mathrm{Det}[\lambda -\widehat{\rho }]=\lambda ^{N}+c_{1}(\left\{r_{i}\right\} )\lambda ^{N-1}+c_{2}(\left\{r_{i}\right\} )\lambda ^{N-2}+\ldots +c_{N}(\left\{ r_{i}\right\} )$ of
a pure state is considered. The unit trace requirement ensures that $c_{1}(\left\{ r_{i}\right\} )\equiv - 1$, while the rank $1$ requirement implies the constraints  $c_{m}(\left\{ r_{i}\right\} )=0$ for $m=2,\ldots,N$. 

The set of $N$ conditions on the $N^2$ components of the vector representing a pure state, constrains the vector to lay on a restricted manifold  of lower dimension $\left(N^{2}-N\right) $ at the surface of the unit hypersphere.  As a  consequence, the density matrices for  quantum systems of dimension $N>2$ do not `fill' the whole inner part of the unit hypersphere,  but they are laying inside an $\left(N^{2}-2\right) $-dimensional body formed as a convex hull of the pure
states of the $\left(N^{2}-N\right) $ dimensional manifold. This convex hull   plays the role of the Bloch ball for higher dimensions of the Hilbert space and  has been exhaustively studied  in \cite{Caves} for the case of $3$-dimensional systems.  The convex hull    is touching the unit hypersphere
only for the  pure states while  its  outer hypersurface, which we denote by $S_{CH}$, is naturally  the border between positive and non-positive Hermitian matrices of unit trace. Therefore $S_{CH}$ 
consists only of the degenerate density matrices which have at least one zero eigenvalue. In Fig.\ref{FIG1}~(b) we
symbolically illustrate the convex hull of pure states, such that all
density matrices are inside this body.  

The situation is similar for the convex hull formed exclusively by the pure product  states. However   since the product states is a manifold of measure zero in the set of all states,   the
convex hull of pure product states is located inside the convex hull of all
 pure states, apart from the points at the unit hypersphere corresponding to the pure product states.  At the same time,
the outer surface of the convex hull of product states does not separate positive from
negative matrices, and hence it must not exclusively contain degenerate matrices. In Fig.\ref{FIG1}~(a) we illustrate the situation symbolically
by showing pure product states as points at the spherical surface and the
convex hull of these points by a polytope inside the sphere. At the surface
and inside the polytope the states are separable. 

In Fig.\ref{FIG1}~(c) we illustrate that inseparable states are the mixed states
inside the body symbolizing the convex hull of pure states but are outside
the polytope symbolizing the convex hull of the product states. In Fig.\ref{FIG1}~(e)-(f) we illustrate the geometric meaning of Eq.(\ref{EQ1}), that each mixed state can be represented as a sum of separable state symbolized by the
polytope within a scaled sphere of radius $a=\sum_{i=1}^{M}a_{i}$ and an
entangled state in the body within a scaled sphere of radius $b=\sum_{i=1}^{m}b_{i}=1-a$. In the situation where $b$ is minimum ($b=B$, Eq.(\ref{one})), the
corresponding extremum states are at the surfaces of the polytope and the
body, respectively, as shown in Figs.\ref{FIG1}~(e)-(f). Obviously for a given a state such a decomposition is unique and 
the extremum state on the outer surface $S_{CH}$ of the convex hull corresponds to essentially entangled component in Eq.(\ref{one}).

Let us now turn to the properties of the essentially entangled component $\widehat{\rho }_{\mathrm{ent}}$ which, as it will be shown now,
 is a density matrix of rank $d_E$ strictly less than  the dimension $N$ of the Hilbert space  of the entire system. The essentially
 entangled component belongs to the outer hypersurface $S_{CH}$ of the convex hull of all states,  but not every state on $S_{CH}$ 
is an essentially entangled component; only some of them which do not contain the separable part (see Fig.\ref{FIG1}~(c)). In addition, the eigenvectors of $\widehat{\rho}_{\mathrm{ent}}$, $\left|\psi_{l}\right\rangle$ of $\widehat{\rho}_{\mathrm{ent}}$ with $l=1,\ldots,d_E$, 
  are necessarily $K$\textit{-entangled} pure states in the sense that these cannot be written as direct product of $K$ pure states corresponding respectively to the $K$ subsystems. Henceforth, we call pure states which are direct products of $K$ pure states of the $K$ subsystem, $K$\textit{-product} states.

Consider now the Hilbert space $H_E$ of  dimension $d_E$, which is associated with the eigenvectors  $\left|\psi_{l}\right\rangle$ of 
$\widehat{\rho }_{\mathrm{ent}}$. Each state $\left|\bar{\psi}\right\rangle$ belonging to the Hilbert space $H_E$ is apparently a linear combination of the eigenvectors,
\begin{equation}
\left|\bar{\psi}\right\rangle=\sum_{l=1}^{d_E}\lambda_{l} \left|\psi_{l}\right\rangle \,. \;  \label{t1}
\end{equation}
The vector $\left|\bar{\psi}\right\rangle$  can be also seen as a result of the action of an element $\hat{U}_E$ of the unitary group $SU(d_E)$ associated with the Hilbert subspace $H_E$ at one of the eigenvectors, 
\begin{equation}\left|\bar{\psi}\right\rangle=\hat{U}_E \left|\psi_{1}\right\rangle \,.\; \label{t2}
\end{equation}
Now, let us consider the convex hull of the states $\left|\bar{\psi}\right\rangle$ of the subspace, which naturally contains $\widehat{\rho }_{\mathrm{ent}}$. The condition that $\widehat{\rho }_{\mathrm{ent}}$ does not have any separable components, $\left|\psi_{\mathrm{prod}}\right\rangle\left\langle \psi_{\mathrm{prod}}\right|$,  implies that the  convex hull does not contain a product state $\left|\psi_{\mathrm{prod}}\right\rangle\left\langle \psi_{\mathrm{prod}}\right|$ which is possible only if the Hilbert space $H_E$ does not contain $\left|\psi_{\mathrm{prod}}\right\rangle$. We name  a Hilbert subspace with such a property an \textit{essentially entangled subspace} of dimension $d_E$ and in what follows, with the help of this necessary condition,  we find an upper bound on $d_E$.

\begin{theorem}
The maximum rank $d_{E\max }$ of an essentially entangled component $\widehat{\rho }_{\mathrm{ent}}$ for a system of dimension $N$ comprised by $K$ subsystems each of  them of dimension $N_k$, is $N-\sum_{k=1}^KN_k+K-1$. 
\end{theorem}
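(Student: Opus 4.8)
The plan is to recast the bound on the rank $d_E$ as a statement about the maximal dimension of an essentially entangled subspace and then to pin it down by two matching inequalities. As argued just above the theorem, the range of $\widehat\rho_{\mathrm{ent}}$ — the subspace $H_E$ of $H=\bigotimes_{k=1}^K\mathbb{C}^{N_k}$ spanned by the eigenvectors $|\psi_l\rangle$ — must contain no $K$-product vector. Hence $d_E=\dim H_E$ is bounded by the largest dimension $d_{E\max}$ of any subspace of $H$ avoiding all product states, and it suffices to show $d_{E\max}=N-\sum_k N_k+K-1$. I would first prove $\dim V\le N-\sum_k N_k+K-1$ for \emph{every} essentially entangled subspace $V$, and then exhibit one subspace of exactly this dimension, so that the bound is tight.

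For the upper bound I would pass to the projective space $\mathbb{P}(H)=\mathbb{P}^{N-1}$. The $K$-product states form the image of the Segre embedding $\prod_{k=1}^K\mathbb{P}^{N_k-1}\hookrightarrow\mathbb{P}^{N-1}$, an irreducible projective variety $\Sigma$ of dimension $\sum_k(N_k-1)$, while an essentially entangled subspace $V$ corresponds to a projective linear subspace $\mathbb{P}(V)$ disjoint from $\Sigma$. The projective dimension theorem states that two closed subvarieties of $\mathbb{P}^{N-1}$ whose dimensions sum to at least $N-1$ must intersect; applied to $\Sigma$ and $\mathbb{P}(V)$ this forces $(\dim V-1)+\sum_k(N_k-1)\le N-2$, i.e. $\dim V\le N-\sum_k N_k+K-1$. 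Equivalently, writing the condition ``a product state $\bigotimes_k|\alpha^k\rangle$ lies in $V$'' as the simultaneous vanishing of $\mathrm{codim}(V)$ multihomogeneous equations on the complete variety $\prod_k\mathbb{P}^{N_k-1}$, such a system necessarily has a solution as soon as the number of equations does not exceed $\sum_k(N_k-1)$. This step — guaranteeing that a sufficiently large linear subspace must meet the Segre variety — is the crux of the argument and the place where the completeness (compactness) of projective space is essential; a naive parameter count alone would not establish nonemptiness of the intersection.

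For the matching lower bound I would give an explicit essentially entangled subspace. Encode each single-party vector $|\alpha^k\rangle=\sum_{i=1}^{N_k}\alpha^k_i|i\rangle$ by the polynomial $p_k(x)=\sum_{i=1}^{N_k}\alpha^k_i x^{i-1}$, and group the product basis $|i_1\cdots i_K\rangle$ by the total degree $m=\sum_k(i_k-1)$, which ranges over $0,1,\ldots,\sum_k(N_k-1)$. Define the linear functionals $\phi_m(A)=\sum_{\sum_k(i_k-1)=m}A_{i_1\cdots i_K}$ and let $W=\{A\in H:\phi_m(A)=0\ \text{for all }m\}$. The functionals have pairwise disjoint and nonempty supports, hence are linearly independent, so $W$ has codimension $\sum_k(N_k-1)+1$ and dimension $N-\sum_k N_k+K-1$. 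Evaluated on a product state one finds $\phi_m(\bigotimes_k|\alpha^k\rangle)=[x^m]\prod_k p_k(x)$, the coefficients of the product polynomial; since a product of nonzero polynomials over $\mathbb{C}$ is nonzero, at least one coefficient is nonzero, so no nonzero product vector can satisfy all the constraints. Thus $W$ contains no product state and saturates the bound, which together with the upper bound gives $d_{E\max}=N-\sum_k N_k+K-1$. Finally, any density matrix whose range is $W$ (for instance the normalized projector onto $W$) has no product vectors in its range and therefore, by the necessary-and-sufficient characterization used above, realizes an essentially entangled component of the maximal rank.
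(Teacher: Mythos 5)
Your proof is correct, and it is both more rigorous and more complete than the paper's own argument. The paper proves the theorem by the same dimension count you use --- a product state in $H_E$ must satisfy $N-d_E$ orthogonality conditions, while a $K$-product vector carries only $\sum_{k} N_k - K$ free (complex, projective) parameters --- but it simply asserts that this count decides existence: it never establishes that a product vector \emph{must} exist in every subspace whose dimension exceeds the bound (the direction actually needed for the upper bound), nor does it exhibit a subspace or state attaining the bound (needed for the word ``maximum''). You repair both defects. The projective dimension theorem applied to the Segre variety $\Sigma$ and to $\mathbb{P}(V)$ turns the naive count into a genuine intersection statement, precisely at the point you flag as the crux: completeness of projective space over $\mathbb{C}$ is what makes ``enough parameters'' imply ``a solution exists,'' and this is exactly what the paper's counting argument takes for granted. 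Your anti-diagonal subspace $W$ --- equivalently, the orthogonal complement of the span of the vectors $\bigotimes_k \left(1,\lambda,\dots,\lambda^{N_k-1}\right)$, i.e.\ Parthasarathy's completely entangled subspace --- then shows the bound is tight, a step entirely absent from the paper. Your final passage from ``no product vector in the range'' to ``essentially entangled component'' rests on the Lewenstein--Sanpera fact that $\widehat{\rho} - \epsilon\left\vert p\right\rangle\left\langle p\right\vert \geq 0$ for some $\epsilon>0$ if and only if $\left\vert p\right\rangle$ lies in the range of $\widehat{\rho}$; the paper's Appendix proves the necessity direction you quoted, and the converse is equally standard, so invoking it is legitimate. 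In short: same counting idea for the upper bound, but where the paper stops at a heuristic, you supply the algebraic-geometric argument that justifies it and add the attainability construction the paper omits.
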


\begin{proof}
Let us assume that the essentially entangled component $\widehat{\rho }_{\mathrm{ent}}$  is a density matrix of rank $d_E$,  and consider the subspace $H_E$ which is spanned by its $K$-entangled eigenvectors $\left\{\left|\psi_1\right\rangle,\left|\psi_2\right\rangle,\ldots\left|\psi_{d_E}\right\rangle\right\}$. Let us also consider the orthogonal compliment of the subspace $H_E$, $H_E^\bot$  of dimension $N-N_{d_E}$ and arbitrary  select a set of mutually orthogonal vectors  $\left\{\left|\chi_1\right\rangle,\left|\chi_2\right\rangle,\ldots\left|\chi_{N-d_E}\right\rangle\right\}$ spanning  $H_E^\bot$. 

The subspace $H_E$ is not essentially entangled, if there is at least one  product state $\left|\psi_{\mathrm{prod}}\right\rangle$ which can be expressed  as in Eq.(\ref{t1}),
\begin{equation}
\left|\psi_{\mathrm{prod}}\right\rangle=\sum_{l=1}^{d_E}\lambda_{l} \left|\psi_{l}\right\rangle \,. \; \label{gs}
\end{equation}
where $\lambda$'s are complex numbers. Equation (\ref{gs}) implies that  $\left|\psi_{\mathrm{prod}}\right\rangle$  must be orthogonal to  every element  $\left\{\left|\chi_i\right\rangle\right\}$, with $i=1,\ldots , N-d_E$, of the chosen basis in $H_E^\bot$, 
\begin{equation}
\left\langle \psi_{\mathrm{prod}}\right|\left.\chi_{i=1, \ldots , N-d_E}\right\rangle=0 \;.
\end{equation}
The maximum number of such conditions  equals to the number of parameters defining a product state, which for a $K$-product state amounts to $\sum_{k=1}^KN_k -K$. Therefore the maximum rank of an essentially $K$-entangled density matrix cannot be equal or exceed $\sum_{k=1}^KN_k -K$.

The maximum rank is smaller when we speak not about the essentially $K$-entangled component, but about  the essentially entangled component which does not contain, not only $K$-product, but any product state. For this case one has to identify the bi-partition of the system that yields the maximum number of parameters characterizing the product state.

In the Appendix we provide a more detailed proof of this theorem. 

\end{proof}
 If this theorem is applied to  the case of $2$ qubits in mixed state,  $d_{Emax}=1$ is obtained meaning that   that the essentially entangled component can only be a  pure entangled state. This result is in agreement with the results  in \cite{Sanpera} where the bipartite case is treated.  We would like to note, that the example studied in Section~\ref{D}  gives some preliminary evidence that    $\widehat{\rho }_{\mathrm{ent}}$ is prone to stay very near to pure states ($Tr\left[\widehat{\rho }_{\mathrm{ent}}^2\right]\approx1$) even though $d_{Emax}\rightarrow N$ for $N>>1$.

\section{The linear programming iteration algorithm that yields the
essentially entangled component of a density matrix}\label{B}

One can find the maximum separable and the essentially entangled components of an
arbitrary density matrix straightforwardly with the help of the linear
programming algorithm applied to the convex hull  of general pure states and the ``polytope'' of
pure separable quantum states. The main obstacle on this way is a high dimensionality of the corresponding Liouville
vector space, which makes intractable the direct approach within any
approximation. In fact, even for the simplest multipartite system of three
qubits, the dimensionality ($N^{2}$) of the density matrix space is $64$, such
that even for the rather low-accuracy approximation attributing just $10$
points per dimension, one encounters a polytope of already $10^{64}$
vertices.

Here, we suggest  a way to   crucially decrease the number of the vertices that enter as 
samples in the algorithm and, in consequence, the computational complexity of the procedure. We first notice the fact that the solution of the problem 
and, in general, any convex decomposition of the form Eq.(\ref{EQ1}), allows for at most $N^{2}$ non-zero coefficients $a_{i}$ and $b_{i}$. This observation can be formally justified by a theorem of Carath\'{e}odory as mentioned in \cite{Ioannou}.
In the limit  $B=\left(\sum_{i=1}^{m}b_{i}\right)_{\min}$ the  pure states are the vertices
associated with the corners of the facets corresponding to the solution, as illustrated
 in Fig.~\ref{FIG1}~(e)-(f), while other vertices can be discarded.

Therefore, at first step  we may randomly take $N^{4}$ product states, $N^{4}$ general states and in order to ensure the algorithmic stability, complement this set by the $N^2$ eigenvectors of the given density matrix. We find the solution of the linear programming problem, which
typically has complexity $\sim \left( 2N^{4}\right) ^{3}$, and thereby
identify at most $N^{2}-J$ product states and $J$ general states with
nonzero coefficients $a_{i}$ and $b_{i}$, respectively. The linear constraint imposed on the algorithm is the minimization of $\sum_{i=1}^{m}b_{i}$ and the solution provided is a `local' minimum, for the given set of vectors fed to the algorithm.  Our  aim is to find the global minimum value of  $\sum_{i=1}^{m}b_{i}$ that is equal to $B$ and to this end we create an iterative  optimization loop which guides us there. 


At the second and subsequent steps, we take the product states resulting from the solution of the optimization problem at the former step
and   by applying to each of them $N^2$ randomly chosen local transformations
 $\exp \left\{ i\sum_{i\in \mathrm{local}}\boldsymbol{\alpha }_{i}\widehat{g}_{i}^{N}\right\} $  we generate  $\sim N^{4}$ new product states. We also generate new entangled states  by applying random generic transformations $\exp \left\{i\sum_{i=1}^{N^{2}-1}\boldsymbol{\beta }_{i}\widehat{g}_{i}^{N}\right\}$ to each of the entangled states obtained at the former step. Here  $i$ numerates generators of the $SU(N)$ group while $i\in \mathrm{local}$ mean generators of the subgroup of local transformation. Random parameters and are normally distributed with width gradually decreasing with the number of the iteration step.
We again solve the linear
programming problem for $\sim N^4$ vertices at these two new polytopes and iteratively repeat all the procedure till the result
converges. Note that each next step, the presence of the solution of the former step of the loop is essential
in order to guarantee an outcome  from the linear programming algorithm. The set of the eigenvectors of the density matrix plays this role for the first step.
 Numerical inspection shows that the final results of the algorithm i.e., the product component  $\widehat{\rho }_{\mathrm{sep}}$ and the essentially entangled part  $\widehat{\rho }_{\mathrm{ent}}$,  Eqs.(\ref{sep})-(\ref{ent}), are always the same for different runs. 

The algorithm described above concerns the case of full  separability of a state or else, the identification of  the essentially $K$-entangled component. The same steps, can be applied if we make a repartition of the initial system and consider   $L$-separability  of the state with $L<K$.  Furthermore, if the set of separable states is enlarged to include other special classes of pure states e.g.  states of the $W$ class \cite{Dur}, then one can apply the idea of the algorithm for revealing the  classification  of mixed multipartite entangled state as the one introduced in \cite{SanperaII} for three qubits. We would like to add here that  for the specific  case of three qubits in mixed state, a lot of progress has been recently made on the classification of entanglement via analytic criteria and  efficient algorithms \cite{Lohmayer}-\cite{Rodriques}.

 Finally it is important to mention that linear programming scales polynomially with the dimension of the vector space under consideration in the general case but not always --still a zero-measure of non-polynomial cases may exist.  In consequence, the same additional `rule' has to be applied to the proposed algorithm and the identification of the special cases where the algorithm becomes non-polynomial is an interesting open problem, not resolved in this  work. However, on a practical level  even  in this case, a small random variation of the initial density matrix  brings the problem back to a polynomially complexity.

\section{Suggestions for Characterizing  Entanglement Properties of the essentially entangled component}\label{C}

We may claim that all  information relevant to   entanglement   is  contained in the
essentially entangled part $\widehat{\rho }_{\mathrm{ent}}$ of the density matrix. Though this is not the main object of this work,  we make some simple suggestions for analyzing  entanglement properties of $\widehat{\rho }_{\mathrm{ent}}$ employing previous results \cite{MASV} about characterization of  entanglement for pure states.

For pure quantum states, entanglement is directly related to the factorizability at state vectors, and therefore one can characterize  entanglement by identifying the orbit of local transformations
for a given state. This  orbit can be marked by a complete set of polynomial invariants or alternatively
by the coefficients $\left\{ \beta \right\}$ of the tanglemeter $\widehat{Nl}\left( \left\{ \beta \right\} \right) =\sum_{i,\ldots ,j}\beta
_{i,\ldots ,j}\sigma _{i}^{+}\ldots \sigma _{j}^{+}$ of a given state \cite{MASV}.  The  state defined as $
\left\vert c\left( \left\{ \beta \right\} \right) \right\rangle =\exp \left[\widehat{Nl}\left( \left\{ \beta \right\} \right) \right] \left\vert
0\right\rangle $ is the so called  canonical state, and this can be reached from the state under study by the action of local operations under the constraint that the population of the reference state $\left\vert
0\right\rangle $ is maximized.
In addition to the identification of the orbit of local transformations the tanglemeter generalizes the concept of logarithm to
 vectors and  its coefficients straightforwardly reveal the factorization properties of the state. 
 
Entanglement of mixed states cannot rely only on one operation of group multiplication  but  it should also involve the procedure of casting in convex sums. Therefore the algebraic structure does not suggest a natural framework for the characterization of entanglement in this case. Construction of an approach to entanglement characterization is a convenience just complementing the exhaustive information contained in the essentially entangled part of the density matrix.

A straightforward way to characterize  entanglement of mixed states
would be to find the tanglemeters of the eigenstates of $\rho_{\mathrm{ent}}$. However, it does not mean that an entangled state
corresponding to another orbit cannot be detected. In fact, any pure state which
belongs to the essentially entangled subspace $H_E$ spanned by the eigenvectors of 
$\widehat{\rho }_{\mathrm{ent}}$, is also a legitimate representative of the
ensemble of entangled states associated with this density matrix. One
therefore may want to find the ``corners'' of this ensemble of states by
identifying the state $\left\vert c_{1}\right\rangle $ in  $H_E$
closest to the set of product states $\mathcal{P}$, followed by
identification of a state $\left\vert c_{2}\right\rangle \perp \left\vert
c_{1}\right\rangle $ closest to $\mathcal{P}$ then, $\left\vert
c_{3}\right\rangle \perp \left\vert c_{2}\right\rangle ,\left\vert
c_{1}\right\rangle $ \textit{etc}, till $\left\vert c_{d_E}\right\rangle $, and calculate tanglemeters
for these ``corners''. Tanglemeter coefficients of any state in $H_E$
will therefore be within the borders given by these ``corners''. We would like to mention here that the use of the tanglemeter as a method for characterizing multipartite entanglement is not essential here. One may apply this idea to  other  measures  of multipartite entanglement for pure states as are the entanglement monotones from anti-linear operators introduced in \cite{Osterloh}.

One more option is to find the tanglemeter coefficients distribution function 
\begin{eqnarray}
P\left( \left\{ \beta \right\} \right)  =&\int \left\langle c\left( \left\{
\beta (x)\right\} \right) \right\vert \widehat{\rho }_{\mathrm{ent}
}\left\vert c\left( \left\{ \beta (x)\right\} \right) \right\rangle \nonumber  \\
&\delta \left( \left\{ \beta (x)-\beta \right\} \right) d\mu _{x\in 
H_E}
\end{eqnarray}
resulting from averaging over the Haar measure $\mu _{x\in 
H_E}$ in the subspace $H_E$ in according with the Eq.(\ref{t2}) with  weight
suggested by $\widehat{\rho }_{\mathrm{ent}}$,  the probability to have canonic state with given
tanglemeter coeficients. The number $P\left( \left\{
\beta \right\} \right) $  gives the probability density to find an
entangled state which belongs to the orbit characterized by the set $\left\{
\beta \right\} $ of the tanglemeter coefficients. In the case where one of
the eigenvalues of $\widehat{\rho }_{\mathrm{ent}}$ is much larger than
others, the probability distribution $P\left( \left\{ \beta \right\} \right) 
$ locates near the tanglemeter of the corresponding eigenvector and it can
be adequately characterized by a small covariance matrix of the tanglemeter's
coefficients.

\section{Example}\label{D}
 We now present an illustration  of the introduced methods at a physical example of an open system experiencing loss of coherence due to presence of  classical noise. The model   comprises three elements: two two-level systems and a three-level system. 
The local symmetry group for each of two-level systems is the  $SU(2)$ group, for   the three-level 
 $SU(3)$ group, while for the total assembly the group of transformations (local and non-local) is the $SU(12)$ . We consider the following physical ingredients of the combined system:  an 
atom in $p$-state ($L=1$, $M_{L}=0,\pm 1$) in a static magnetic field, which
parametrically interacts with a two-mode electromagnetic field. We also assume that each of the field modes  allows for two possible polarizations of the photons.

The Hamiltonian of the system  consists of four parts: \\
(i) the Hamiltonian of the first field mode  $\widehat{H}_{1}=k_{z}\left(\widehat{a}_{x}^{\dagger }\widehat{a}_{x}+\widehat{a}_{y}^{\dagger }\widehat{a}_{y}\right) $ with  wavevector $k_{z}$ and polarizations $x$ and $y$,\\
(ii) the Hamiltonian of the second mode $\widehat{H}_{2}=k_{x}\left(\widehat{b}_{y}^{\dagger }\widehat{b}_{y}+\widehat{b}_{z}^{\dagger }\widehat{b}_{z}\right) $ with  wavevector $k_{x}$, 
\\ (iii) the Hamiltonian of the atom $\widehat{H}_{3}=\left( \mathbf{H}\widehat{\mathbf{L}}\right) $ in the static magnetic $\mathbf{H=}\left\{H_{x},H_{y},H_{z}\right\} $ field, where $\widehat{\mathbf{L}}$ the angular momentum vector operator, and \\
(iv) the  Hamiltonian describing the parametric interaction
\begin{equation}
\widehat{H}_{4}=\frac{\left( \widehat{a}_{x}^{\dagger }\widehat{a}_{y}+%
\widehat{a}_{y}^{\dagger }\widehat{a}_{x}\right) \widehat{X}\widehat{Y}}{%
k_{z}-\omega _{1}}+\frac{\left( \widehat{b}_{y}^{\dagger }\widehat{b}_{z}+%
\widehat{b}_{z}^{\dagger }\widehat{b}_{y}\right) \widehat{Y}\widehat{Z}}{%
k_{x}-\omega _{2}},  \label{EQ8}
\end{equation}
which results  from the second order perturbation theory applied over the dipole
interaction $\left( \widehat{a}_{x}^{\dagger }+\widehat{a}_{x}\right) 
\widehat{X}+\left( \widehat{a}_{y}^{\dagger }+\widehat{a}_{y}\right) 
\widehat{Y}+\left( \widehat{b}_{z}^{\dagger }+\widehat{b}_{y}\right) 
\widehat{Y}+\left( \widehat{b}_{z}^{\dagger }+\widehat{b}_{z}\right) 
\widehat{Z}$.\\
Here $\widehat{a}_{i}^{\dagger }$ and $\widehat{b}_{i}^{\dagger }\ $\ are the
photon creation operators of the first and the second mode, respectively,
corresponding to polarization along the direction $i$, while $\widehat{a}_{i}
$ and $\widehat{b}_{i}\ $\ are their conjugate photon annihilation
operators, respectively.\ By $\omega _{1}$ and $\omega _{2}$ we denote 
the frequencies of the allowed dipole atomic transition from the state $p$,
that are closest to the frequencies of the first $k_{z}$ and the second $k_{x}$ photon modes, respectively.\ The atomic optical electron
radius-vector operator $\widehat{\mathbf{R}}=\left\{ \widehat{X},\widehat{Y},
\widehat{Z}\right\} $ and the angular momentum vector operator $\widehat{\mathbf{L}}=\left\{ \widehat{L}_{x},\widehat{L}_{y},\widehat{L}_{z}\right\} $
enter the Hamiltonian as the tensor product and the scalar products with the
magnetic field, respectively, while the light velocity, the electron
charge, and the Planck's constant\ are set to unity.

Since parametric interaction implies conservation of the total number of photons on the two 
 modes, $\widehat{H}_{1}+\widehat{H}_{2}$ is an integral of motion for the system and only the Hamiltonians 
$\widehat{H}_{3}$ and $\widehat{H}_{4}$ are responsible for the dynamical process of interest.
The relevant part $\widehat{H}=\widehat{H}_{3}+\widehat{H}_{4}$  can be re-written
in a more convenient way, noting that  the $x$, $y$, and $z$ components of the vector-operator $\widehat{\mathbf{L}}
$ form an $su(2)$ subalgebra of the symmetry algebra $su(3)$ of the atomic
triplet $p$, while the operators $\widehat{X}\widehat{Y}$ and $\widehat{Y}\widehat{Z}$ entering $\widehat{H}_{4}$ as the tensor product of the
components of $\widehat{\mathbf{R}}$ do not belong to this subalgebra and
yield other generators of $SU(3)$ group. All these operators can be expressed in
terms of Gell-Mann matrices $\widehat{\lambda }_{i}$ with $i=1,\ldots ,8$.
Moreover, the properly normalized bi-lineal photon operators $\widehat{a}_{x}^{\dagger }\widehat{a}_{y}+\widehat{a}_{y}^{\dagger }\widehat{a}_{x}$, $\widehat{a}_{x}^{\dagger }\widehat{a}_{y}-\widehat{a}_{y}^{\dagger }\widehat{a}_{x}$, and $\widehat{a}_{x}^{\dagger }\widehat{a}_{x}-\widehat{a}_{y}^{\dagger }\widehat{a}_{y}$ of the first mode form an $su(2)$ algebra, and
so do the similar operators of the second mode. Therefore, these can be expressed in terms
of the Pauli matrices $\widehat{\sigma }_{1,i}$ and $\widehat{\sigma }_{2,i}$, respectively, with $i=x,y,z$. Summarizing, the  Hamiltonian $\widehat{H}=\widehat{H}_{3}+\widehat{H}_{4}$ can be caste
in the form
\begin{equation}
\widehat{H}=\sum_{i=1}^{3}\widehat{\lambda }_{i}f_{i}+f_{4}\widehat{\sigma }%
_{1,x}\widehat{\lambda }_{4}+f_{6}\widehat{\sigma }_{2,x}\widehat{\lambda }%
_{6}+\varepsilon _{1}\widehat{\sigma }_{1,z}+\varepsilon _{2}\widehat{\sigma 
}_{2,z},  \label{EQ9}
\end{equation}
where the parameters $f_{i=1,2,3}$ depend on the static field, parameters $f_{4}$
and $f_{6}$ are governed by the detuning of the photon frequencies from the
atomic transition frequencies, and parameters $\varepsilon _{i=1,2}$
deviate from zero when the photon frequency turns to be dependent on the
polarization in the presence of an anisotropicity of the refraction index
(that is when $k_{z}$ is slightly different for the $x$ and $y$
polarizations, and similar for $k_{x}$).

Now let us consider a realistic situation where the static field  experiences  small and rapid
fluctuations, that is  $f_{i}(t)=\overline{f}_{i}+\delta f_{i}(t)$ for $i=1,2,3$. In this case  the Liouville equation $i\overset{\cdot }{\widehat{\rho }}=\left[\widehat{H}(t),\widehat{\rho }\right] $ describing the time evolution of the density matrix $\widehat{\rho }(t)$ of the assembly, can be averaged over these rapid fluctuations  $\delta f_{i}(t)$,  yielding \cite{Akulin} the following Lindblad master equation 
\begin{equation}
i\overset{\cdot }{\widehat{\rho }}=\left[ \widehat{\overline{H}},\widehat{\rho }\right] -i\sum_{i,j=1}^{3}\overline{\delta f_{i}(t)\delta f_{j}(t)}\left[ \widehat{\lambda }_{i},\left[ \widehat{\lambda }_{j},\widehat{\rho }\right] \right] ,  \label{EQ1d}
\end{equation}
where the upper bar denotes time average. Substitution to this master
equation in the Liouville  representation \begin{equation} \widehat{\rho }(t)=\sum_{i=0}^{143}%
r_{i}(t)\widehat{g}_{i}^{12}\label{Re}\end{equation} of the density matrix in terms of the
generators of the unitary group $SU(12)$, yields a system
of $143$ linear, first-order differential equations 
\begin{eqnarray}
i\overset{\cdot }{r}_{k} &=&\sum_{m=1}^{143}\left( \mathrm{Tr}\left\{ 1
\widehat{g}_{k}^{12}\left[ \widehat{\overline{H}},\widehat{g}_{m}^{12}\right]
\right\} -i\mathcal{R}_{k,m}\right) r_{m}  \label{EQ2d} \\
\mathcal{R}_{k,m} &=&\sum_{i,j=1}^{3}\overline{\delta f_{i}(t)\delta f_{j}(t)%
}\mathrm{Tr}\left\{ \widehat{g}_{k}^{12}\left[ \widehat{\lambda }_{i},\left[\widehat{\lambda }_{j},\widehat{g}_{m}^{12}\right] \right] \right\}   \notag
\end{eqnarray}
for the real vector components $r_{i}(t)$.
Straightforward analytic solution of  Eq.(\ref{EQ2d}) gives oscillations with time for
some of the coefficients $r_{i}(t)$  while others   die off with  rates determined by the
relaxation operator $\mathcal{R}_{k,m}$.

A considerable amount of work on the understanding of the dynamics of entanglement has been performed so far  and we refer an interested reader to  \cite{Davidovich} for a complete review and reference list. In  Fig.~\ref{fig2} we graphically represent a generic solution for this example, as a spiral in the Liouville space, gradually approaching a stationary solution.   This picture also provides a complementary point of view on the phenomenon  of sudden death and revival of entanglement \cite{Eberly}. With the course of time, we expect the essentially entangled part to oscillate between different subspaces and eventually to vanish for sometime -- when the density matrix is passing inside the polytope of separable states as it is illustrated in Fig.~\ref{fig2}. The revival of  entanglement is marked by the exit  of the density matrix from the polytope. This graphical representation is justified by the calculations which we present in the following.
\begin{figure}[h] \includegraphics[width=0.48\textwidth]{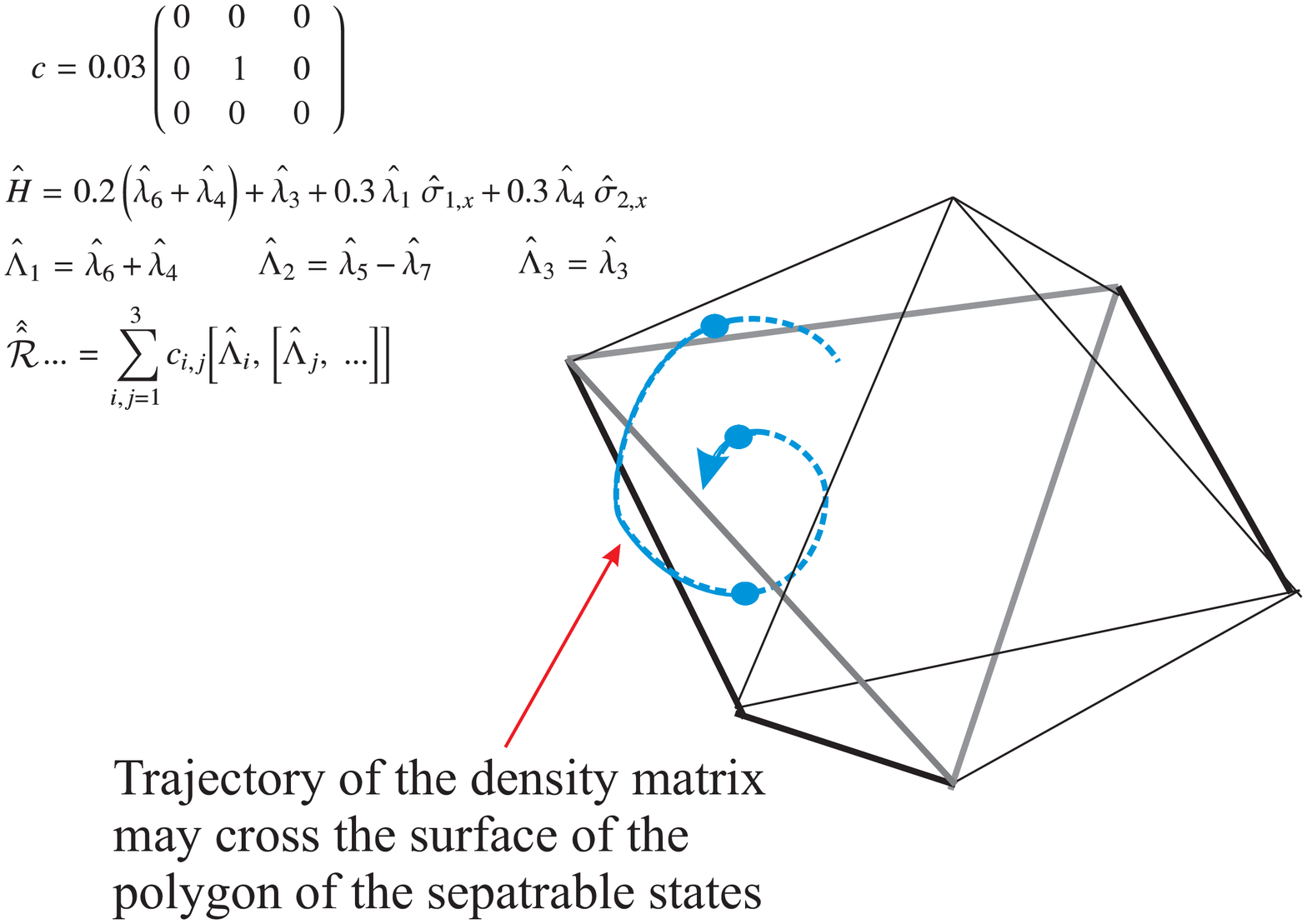}
\vspace{1cm}
\caption{A symbolic description of the trajectory in the Liouville  space of a mixed state  undergoing loss of coherence due to interaction with the environment. Crossing of the polytope of the separable states results in sudden death (or birth) of  entanglement. In the inlet we list the numerical values of the parameters of the model. } \label{fig2}\end{figure}

We now solve  the model Eq.(\ref{EQ2d}) for a  set of given values for  ${f}_{i}$ presented in Fig.~\ref{fig2}, and  reconstruct
the density matrix $\widehat{\rho }(t)$ with the help of Eq.(\ref{Re}). We  summarize the results of calculations  in Fig.~\ref{fig3}.
In Fig.~\ref{fig3}~(a) we plot the purity $P(t)=\mathrm{Tr}\left[\widehat{\rho }^2(t)\right]$ of the density matrix  as a function of time.
At each time step we apply the algorithm and we decompose the density matrix as $\widehat{\rho }(t)=(1-B(t))\widehat{\rho }_{\mathrm{sep}}(t)+B(t)\widehat{\rho }_{\mathrm{ent}}(t)$, Eq.(\ref{one}).
In Fig.~\ref{fig3}~(b) we plot the weight $B(t) =\sum_{i=1}^{m}b_{i}$, Eq.(\ref{EQ2}), of the essentially entangled component in the density matrix. The weight  $B(t)$ is decreasing with time  faster than the purity does and in addition it exhibits some oscillatory behavior that can be probably explained by motion of the essentially entangled component along the facets of the polytope.
In Fig.~\ref{fig3}~(c) we plot the rank $d_{E}$ of $\widehat{\rho }_{\mathrm{ent}}(t)$,  and we observe that this moves in a rather random way between the value $1$ and $5$. We note that if full ($K=3$) separability is considered then $d_{E\max}=7$. However in our program we have included in the ``polytope'' of separable states also the bi-separable states, thus actually  $d_{E\max}=5$.   The ``jumps'' of the rank   demonstrate  the recursive move of essentially entangled component  between different essentially entangled subspaces on $S_{CH}$.  Moreover, in the time interval $~[18.8-19.7]$, $B(t)$ vanishes implying that the state enters inside the polytope of separable states. This physical situation describes a sudden death and sudden revival of entanglement a  phenomenon \cite{Eberly}-\cite{Lopez} which has been studied extensively with other methods. Our geometric decomposition offers additional information on the origin of this phenomenon, see Fig.~\ref{fig2}.

In order to analyze the entanglement properties of the essentially entangled component we first note that for the chosen model system in the vast majority of the time steps, there is a dominant eigenvector $\widehat{e}_{\mathrm{dom}}$ for $\widehat{\rho }_{\mathrm{ent}}$ with a corresponding eigenvalue $\lambda_{dom}>0.9$, see  Fig.~\ref{fig3}~(d). Therefore, for this specific example and assigned parameters,  it makes sense just to analyze  entanglement properties of $\widehat{e}_{\mathrm{dom}}$, whenever the condition $\lambda_{dom}>0.9$ is satisfied, and  to conclude from this analysis the entanglement properties of $\widehat{\rho }_{\mathrm{ent}}$. Naturally, this analysis together with the weight $B(t)$, give all the information necessary to describe  entanglement in  $\widehat{\rho}$. 

\begin{center}
\begin{figure*}[t]{\centering{\includegraphics*[width=0.8\textwidth]{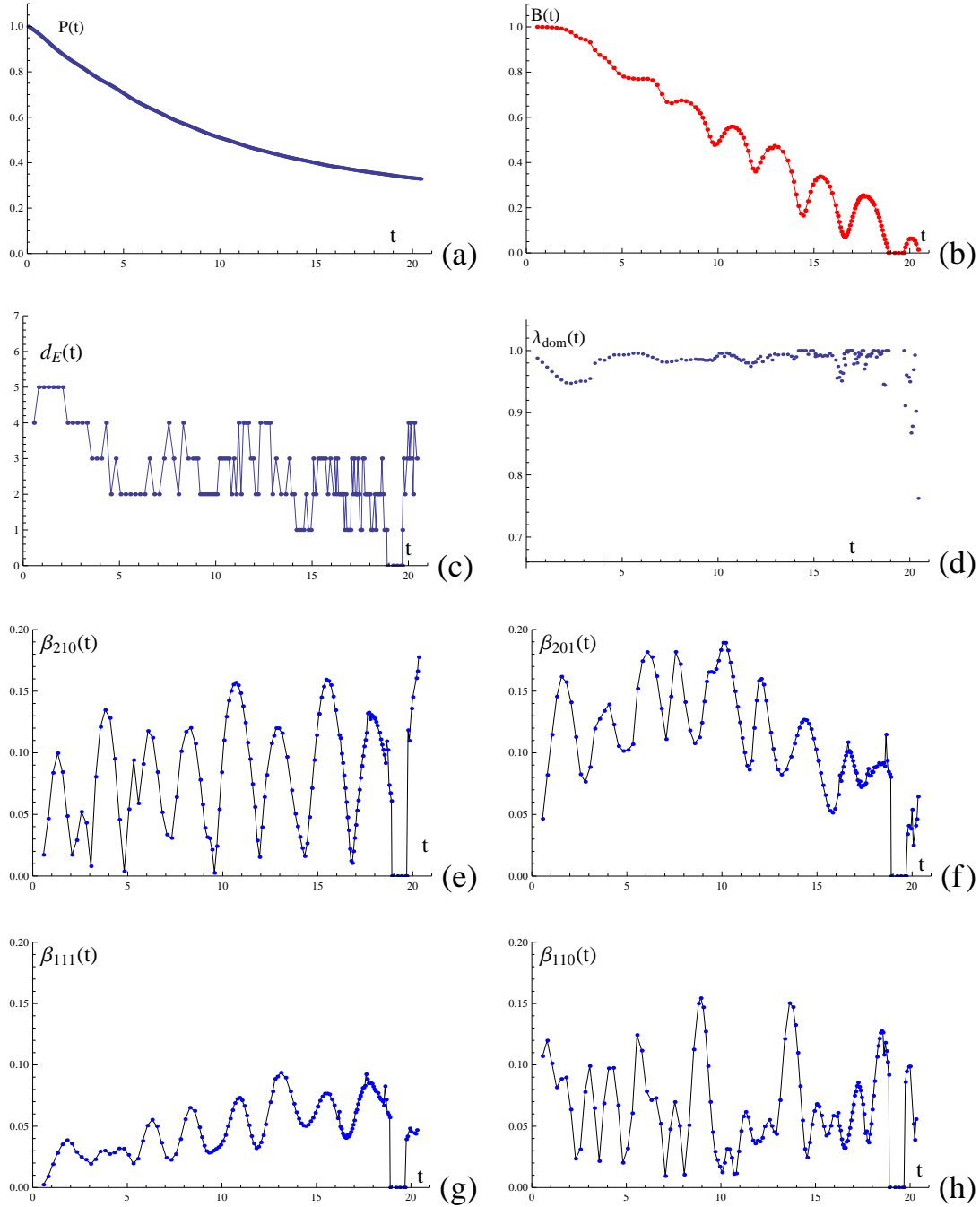}}}\caption{We solve the Lindbland equation for the example and we apply the algorithm at each time step. (a) Purity of the assembly. (b) The statistical contribition of $\widehat{\rho }_{\mathrm{ent}}(t)$ to the density matrix. (c) The rank of $\widehat{\rho }_{\mathrm{ent}}(t)$.  (d) The eigenvalue of the dominant eigenvector of $\widehat{\rho }_{\mathrm{ent}}(t)$.  (e)-(h) The oscillations of the real coefficients of the tanglemeter. In the time interval $~[18.8-19.7]$ sudden death of entanglement takes place and then its revival. } \label{fig3}\end{figure*}
\end{center}

We analyze the entanglement properties of  $\widehat{e}_{\mathrm{dom}}$ with the help of the method of nilpotent polynomials \cite{MASV}. In the Appendix we provide an explicit method for deriving the general expression for the tanglemeter of a wavevector describing an assembly of a three-level system and two two-level systems: 
\begin{eqnarray}\widehat{Nl}\left(\left\{ \beta \right\} \right) &=&\left( \beta _{110}\hat{t}^{+}\hat{\sigma}_{1}^{+}+\beta _{101}\hat{t}^{+}\hat{\sigma}_{2}^{+}+\beta _{011}\hat{\sigma}_{1}^{+}\hat{\sigma}_{2}^{+}+\right. \nonumber \\&&\left. \beta _{210}\hat{u}^{+}\hat{\sigma}_{1}^{+}+\beta_{201}\hat{u}^{+}\hat{\sigma}_{2}^{+}+\right. \nonumber \\&&\left.\beta _{111}\hat{t}^{+}\hat{\sigma}_{1}^{+}\hat{\sigma}_{2}^{+}\right)  \end{eqnarray}
with $\beta _{111},\beta _{201},\beta _{210},\beta _{110}$ being positive numbers and $\beta _{101},\beta _{011}$ being complex. The matrix representation of the nilpotent variables (operators) $\hat{u}^+$, $ \hat{v}^+$, $\hat{\sigma}^+$ is also provided in the Appendix. Concerning now the physical meaning of the coefficients. The coefficients of the tanglemeter even though are not entanglement monotones \cite{Horo} in the strict sense, these are invariant under the action of local transformations and  the presence of any non-zero term in the tanglemeter ensures the presence of entanglement. More precisely, the coefficient $\beta _{111}$ ensures the presence of genuine tripartite entanglement in the state while the rest of the coefficients are related to bipartite entanglement.  
In Fig.~\ref{fig3}~(e)-(h) we plot  those coefficients which are positive, and we observe that these oscillate with time without dissipation. The same holds for the real and imaginary parts of  the complex ones not shown on the figure.

With this example, in addition to the death and revival of entanglement,  we observe two interesting phenomena which need more case study in order to decide whether are specific to this example or general. The first is the presence of a dominant eigenvector in the essential entangled component and the second is the oscillations without dissipation of the entanglement characteristics of the essential entangled component.

\section{Conclusions}\label{E}

In this work we have studied a  concept related to entanglement of mixed states namely the essentially entangled component of a mixed multipartite state and more important, we have suggested an efficient algorithm for its identification. The essentially entangled component  is the complementary part to the best separable approximation introduced in \cite{Sanpera} and this naturally contains all the entanglement of the density matrix. We analyze some properties of the essentially entangled components and we suggest methods for characterizing its entanglement content.

 Our main  tool is the accustomed geometric description of mixed quantum states in the spirit of Bloch vector representation, which results from the decomposition of a density matrix over the generators of the relevant group. We have shown that pure states are not everywhere on the surface of this hypersphere, in the contract to the  Bloch vector, and that the convex hull of pure states from a convex ``body'' inside the sphere. The convex hull  of separable states forms a convex ``polytope'' inside the ``body'' of general states. As a consequence the entangled states inside the body and outside the polytope can be represented as sum of a separable state on the surface of the polytope and an essentially entangled component located on the surface of the ``body''.  This geometric picture gives the guidance  for  constructing the algorithm  and for analyzing the properties of the essentially entangled component. The latter being located on the surface of the ``body'', form there sets of lower dimensions, such that the rank of the relevant density matrix does not exceed a number which depends on the dimensions of the total system,  and on its chosen partition.

Finally, at a particular example we study the dynamics of an open quantum system and we reconstruct the time trajectory of the decomposed density matrix inside the convex ``body''.  Sudden death and sudden birth of entanglement can be seen as the results of crossing of the of the trajectory of the density matrix with the surface of the ``polytope'' of separable states. There are some other interesting phenomena  appearing in this example but these still need further studies to lead to general conclusions.

Concerning possible applications of the results. The algorithm introduced in this work scales polynomially with the dimension of the system in the general case, and it can be employed to study  open questions about entanglement in mixed states. For instance, this can be applied straightforwardly to address the question of the relative volume of separable states over entangled mixed states as function of the total purity of the system and the total dimension of the system \cite{Volume}. An answer to this question  can serve to the evaluation of emerging quantum technologies and their quantum limits. Moreover, the essentially entangled component containing all entanglement properties of the density matrix may also  provide new directions to entanglement detection \cite{Guhne} and entanglement distillation \cite{Bennett} techniques.

\section*{Acknowledgement}
 VA acknowledges stimulating and useful discussions with Mikhail Tsvasman and Sergey Pirogov‏ and the hospitality accorded to him at Laboratoire J.-V. Poncelet CNRS. VA and AM are thankful to Jens Siewert for indicating to them  important related works. AM acknowledges  financial support from the  Ministry of Education and Science of the Republic of Kazakhstan (Contract $\#$ $339/76-2015$).
\renewcommand{\theequation}{A-\arabic{equation}}
  
  \setcounter{equation}{0} 
\section*{APPENDIX}
\subsection{ A second formulation and proof of the main theorem}

Here we provide a more detailed formulation and proof for the  theorem given in Section~\ref{A} which does not relay on a particular quantum mechanical representation.

\textit{The maximum rank $d_{E\max }$ of an essentially entangled component
  is $N_{CG}-N_{CS}$, where   $N_{CG}$ is the
dimension of the Cartan subgroup of the group of all transformations on the state and $N_{CS}$ is the dimension of  Cartan (sub)subgroup  generating only local transformations.}

\textsl{Remark.} The numbers $N_{CG}$ and $N_{CS}$ give the numbers of complex parameters characterizing generic and product state vectors, respectively, on $N=N_{CG}+1$ dimensional Hilbert spaces.
\begin{proof}
Consider a density matrix $\widehat{\rho }$  and its decomposition to the essentially entangled and separable part $\widehat{\rho }= (1-B) \widehat{\rho }_{\mathrm{sep}}+ B \widehat{\rho }_{\mathrm{ent}}$. Since $B$ corresponds to a minimum value of all possible weights,  we conclude that no $\epsilon>0$ and product vector $\left|p\right\rangle$ exist such that $ \widehat{\rho }_{\mathrm{ent}}-\epsilon\left|p\right\rangle\left\langle p\right|$ is a positive matrix. Considering now the essentially entangled subspace $H_E$ spanned by the eigenvectors $\left|\psi_{i}\right\rangle$ with non-zero eigenvalues of $\widehat{\rho}_{\mathrm{ent}}$ with $i=1,\ldots,d_E$, this condition means that no product state  $\left|p\right\rangle$ exists in $H_E$. Indeed, for the case where
\begin{equation}
\left|p\right\rangle=\sum_{i=1}^{d_E}\left|\psi_{i}\right\rangle+ \epsilon' \left|p'\right\rangle
\end{equation}
with $\left\langle p'\right|\left.\psi_{i}\right\rangle=0$ for every $i=1,\ldots,d_E$ one identifies the vector $\left|p'\right\rangle$  orthogonal to the subspace of $d_E$  eigenvectors which makes
 \begin{eqnarray}
\left\langle p'\right|\left(\widehat{\rho }_{\mathrm{ent}}-\epsilon\left|p\right\rangle\left\langle p\right|\right)\left|p'\right\rangle
\nonumber \\=-\epsilon\left|\left\langle p\right|\left.p'\right\rangle\right|^2<0\;,
\end{eqnarray}
and therefore extremality implies that no product state is orthogonal to the orthogonal compliment $H_E^\bot$ of $H_E$ spanned by the eigenvectors  $\left|\psi_{i}\right\rangle$ of $\widehat{\rho}_{\mathrm{ent}}$ with zero eigenvalues and $i=d_E,\ldots,N_{CG}$.

In other words, in order to find such a state we have to satisfy $N_{CG}-d_E$ equations $\left\langle p\right|\left.\psi_i\right\rangle=0$ with $i=d_E+1,\ldots,N_{CG}$ for a product state $\left|p\right\rangle$ given by specification of its $N_{CS}$ parameters. This is impossible when $N_{CG}-d_E\geq N_{CS}$, which determines the maximum rank $d_{E\max}$ of $\widehat{\rho}_{\mathrm{ent}}$.
\end{proof}


\subsection{ Deriving the tanglemeter of the physical example in Section~\ref{D}}
The system  under consideration consists of the two modes of the field interacting with a  three-level atom. The Hilbert space thus is of  dimension $N=12$,
  a direct product of the spaces of  two two-level systems (qubits) and of one three level system (qutrit).
In the standard computational basis a state vector of the system is  expressed  as
\begin{eqnarray*}
\left\vert \Psi \right\rangle  &=&\psi _{000}\left\vert 000\right\rangle+\psi _{100}\left\vert 100\right\rangle +\psi _{200}\left\vert200\right\rangle +\psi _{010}\left\vert 010\right\rangle+  \\&&  \psi_{001}\left\vert 001\right\rangle\psi _{110}\left\vert 110\right\rangle +\psi_{101}\left\vert101\right\rangle +\psi _{011}\left\vert 011\right\rangle + \\&&\psi _{210}\left\vert 210\right\rangle +\psi _{201}\left\vert201\right\rangle +\psi _{111}\left\vert 111\right\rangle +\psi_{211}\left\vert 211\right\rangle .\end{eqnarray*}
or alternatively using the nilpotent creation operators
\begin{equation} \hat{u}^{+}=\begin{pmatrix} 0 & 0 & 1 \\ 0 & 0 & 0\\  0 & 0 & 0 \end{pmatrix}, \end{equation}
\begin{equation}\hat{t}^{+}=\begin{pmatrix} 0 & 0 & 0 \\ 0 & 0 & 1\\  0 & 0 & 0 \end{pmatrix}, \end{equation}
\begin{equation}\hat{\sigma}^{+}=\begin{pmatrix} 0 &  1 \\ 0 & 0  \end{pmatrix}\end{equation}

as  \begin{eqnarray*}\left\vert \Psi \right\rangle  &=\left( \psi _{000}+\psi _{100}\hat{t}^{+}+\psi _{200}\hat{u}^{+}+\psi _{010}\hat{\sigma}_{1}^{+}+\psi _{001}\hat{\sigma}_{2}^{+}\right.  \\ &\psi _{110}\hat{t}^{+}\hat{\sigma}_{1}^{+}+\psi _{101}\hat{t}^{+}\hat{\sigma}_{2}^{+}+\psi _{011}\hat{\sigma}_{1}^{+}\hat{\sigma}_{2}^{+}+\psi _{210}\hat{u}^{+}\hat{\sigma}_{1}^{+}+ \\&\left. \psi _{201}\hat{u}^{+}\hat{\sigma}_{2}^{+}+\psi _{111}\hat{t}^{+}\hat{\sigma}_{1}^{+}\hat{\sigma}_{2}^{+}+\psi _{211}\hat{u}^{+}\hat{\sigma}_{1}^{+}\hat{\sigma}_{2}^{+}\right) \left\vert 000\right\rangle .\end{eqnarray*}

The next step that should be performed is the application  of all the available local transformations ($SU(3)\otimes 1\otimes 1$, $1\otimes SU(2)\otimes 1$, $1\otimes 1\otimes SU(2)$) on the given state $\left\vert \Psi \right\rangle$ in order to  construct the corresponding canonic state $\left\vert \Psi_{c}\right\rangle $ which marks the orbit of local transformations.  To simplify the procedure, we apply the local transformations  on a given $\left\vert \Psi \right\rangle$ in the following order:

\textsl{(a)} We first apply local operations  generated by the operators $\left\{ \hat{\sigma}_{1}^{x},\hat{\sigma}_{1}^{y},\hat{\sigma}_{2}^{x},\hat{\sigma}_{2}^{y},\hat{\lambda}_{4},\hat{\lambda}_{5},\hat{\lambda}_{6},\hat{\lambda}_{7}\right\} $ and we require that the  polulation of the reference level $\left\vert 000\right\rangle$ is getting maximum. Under this condition the populations of the levels : $\left\vert 100\right\rangle,\left\vert 200\right\rangle,\left\vert 010\right\rangle,\left\vert 001\right\rangle$ are vanishing.

\textsl{(b)} We then apply local operations generated by $\left\{ \hat{\lambda}_{1},\hat{\lambda}_{2}\right\} $ to maximize also the population of the level $\left\vert 111\right\rangle $. This way the contribution of the level $\left\vert 211\right\rangle $  also vanishes.

\textsl{(c)} Finally we apply local operations generated by $\left\{ \hat{\sigma}_{1}^{z},\hat{\sigma}_{2}^{z},\hat{\lambda}_{3},\hat{\lambda}_{8}\right\} $ in order to make the   phase of $\left\vert 111\right\rangle,\left\vert 210\right\rangle,\left\vert 201\right\rangle,\left\vert 110\right\rangle$ equal to  the phase of the amplitude of the reference level $\left\vert 000\right\rangle$.

After this procedure one obtains the following form for the unormalized canonic state:
\begin{eqnarray}
\left\vert \Psi _{c}\right\rangle   = &\left( 1+\alpha _{110}\hat{t}^{+}\hat{\sigma}_{1}^{+}+\alpha _{101}\hat{t}^{+}\hat{\sigma}_{2}^{+}\right. \nonumber \\
& +\alpha _{011}\hat{\sigma}_{1}^{+}\hat{\sigma}_{2}^{+}\alpha _{210}\hat{u}^{+}\hat{\sigma}_{1}^{+}+\alpha_{201}\hat{u}^{+}\hat{\sigma}_{2}^{+}
\nonumber \\&\left.+\alpha _{111}\hat{t}^{+}\hat{\sigma}_{1}^{+}\hat{\sigma}_{2}^{+}\right) \left\vert 000\right\rangle \label{f} \end{eqnarray}
with $\alpha_{111},\alpha_{201},\alpha _{210},\alpha _{110}$ being positive numbers and $\alpha _{101},\alpha _{011}$ being complex.

The final step for arriving to the tanglemeter $\widehat{Nl}\left( \left\{ \beta \right\} \right)$ of the state is to take the logarithm of the 
 polynomial on the nilpotent variables $\hat{t}^{+}$, $\hat{\sigma}^{+}_{1,2}$  in Eq.(\ref{f}). 
It is easy to show that 
\begin{eqnarray*}
\widehat{Nl}\left( \left\{ \beta \right\} \right)  & =\beta_{110}\hat{t}^{+}\hat{\sigma}_{1}^{+}+\beta_{101}\hat{t}^{+}\hat{\sigma}_{2}^{+}+\beta_{011}\hat{\sigma}_{1}^{+}\hat{\sigma}_{2}^{+} \nonumber \\
& \beta_{210}\hat{u}^{+}\hat{\sigma}_{1}^{+}+\beta_{201}\hat{u}^{+}\hat{\sigma}_{2}^{+}+\beta_{111}\hat{t}^{+}\hat{\sigma}_{1}^{+}\hat{\sigma}_{2}^{+}\end{eqnarray*}
with $\beta_{110}=\alpha_{110}$, $\beta_{101}=\alpha_{101}$ \textsl{etc}.

\end{document}